\documentclass[a4paper,11pt]{article}

\usepackage[latin1]{inputenc}

\usepackage{indentfirst} 

\usepackage[english]{babel}
\usepackage{graphicx}
\usepackage{subfigure}
\usepackage{amssymb}

\usepackage{eqnarray}
\usepackage{array}
\usepackage{amsmath, amsthm}

\title{\large \textbf{AN FPT ALGORITHM FOR PLANAR MULTICUTS WITH SOURCES AND SINKS ON THE OUTER FACE}}

\author{Cédric Bentz\footnote{CNAM \& CEDRIC Laboratory, 292 rue Saint Martin, 75003 Paris (France). \protect \\ Phone: +33 (0) 1 58 80 86 14. E-mail address:
cedric.bentz@cnam.fr}}

\date{}


\newtheorem{Lemma}{Lemma}
\newtheorem{Corollary}{Corollary}
\newtheorem{Th}{Theorem}

\begin{document}

\maketitle

\begin{abstract}
Given a list of $k$ source-sink pairs in an edge-weighted graph $G$, the \emph{minimum multicut problem} consists in selecting a set of edges of minimum total weight in $G$, such that removing these edges leaves no path from each source to its corresponding sink. To the best of our knowledge, no non-trivial FPT result for special cases of this problem, which is \textbf{APX}-hard in general graphs for any fixed $k \geq 3$, is known with respect to $k$ only. When the graph $G$ is planar, this problem is known to be polynomial-time solvable if $k = O(1)$, but cannot be FPT with respect to $k$ under the \emph{Exponential Time Hypothesis}.

In this paper, we show that, if $G$ is planar and in addition all sources and sinks lie on the outer face, then this problem does admit an FPT algorithm when parameterized by $k$ (although it remains \textbf{APX}-hard when $k$ is part of the input, even in stars). To do this, we provide a new characterization of optimal solutions in this case, and then use it to design a ``divide-and-conquer'' approach: namely, some edges that are part of any such solution actually define an optimal solution for a polynomial-time solvable multiterminal variant of the problem on some of the sources and sinks (which can be identified thanks to a reduced enumeration phase). Removing these edges from the graph cuts it into several smaller instances, which can then be solved recursively.\\\\
\textbf{Keywords}: Multicuts, Planar graphs, FPT algorithms.
\end{abstract}

\section{Introduction}\label{sectIntro}

Given a list of $k$ pairs (source $s_i$, sink $s'_i$) in an undirected edge-weighted graph $G$, the \emph{minimum multicut problem} (\textsc{MinMC}) consists in selecting a set of edges of minimum total weight in $G$, in such a way that removing these edges leaves no path between $s_i$ and $s'_i$ for each $i$. As the weight $w(e)$ of each edge $e$ is commonly assumed to be a rational number, we can actually assume without loss of generality that each $w(e)$ is an integer (by multiplying all $w(e)$'s by one sufficiently large integer).

A well-known special case of \textsc{MinMC} is the \emph{minimum multiway cut problem}, or \emph{minimum multiterminal cut problem} (\textsc{MinMTC}): in any instance of this problem, we are given a set of \emph{terminals} $\mathcal{T} = \{t_1, \dots, t_{\arrowvert \mathcal{T} \arrowvert}\}$, and the source-sink pairs in the associated \textsc{MinMC} instance are $(t_i, t_j)$ for $i \neq j$.

We shall only consider undirected graphs here. When $k=1$, \textsc{MinMC} (which is then equivalent to \textsc{MinMTC} with $\vert \mathcal{T} \vert = 2$) turns into the famous {\em minimum cut problem}, and can therefore be solved in polynomial time. Moreover, \textsc{MinMC} remains polynomial-time solvable when $k=2$~\cite{refYan83}. However, \textsc{MinMTC} is \textbf{APX}-hard for any fixed value of $\vert \mathcal{T} \vert \geq 3$~\cite{refDahlhaus94}: note that, when $\vert \mathcal{T} \vert = 3$, \textsc{MinMTC} is actually a special case of \textsc{MinMC} with $k = 3$.

When $k$ is part of the input, \textsc{MinMC} is tractable in chains, but \textbf{APX}-hard even in stars with weights 1~\cite{refGVY97}, and hence also in planar graphs where all sources and sinks lie on the outer face. However, when $k = O(1)$, it becomes tractable in trees, and even in graphs of bounded tree-width~\cite{refBentzDAM08}.

Some results are known about the parameterized complexity of \textsc{MinMC}. For instance, it is known to be FPT with respect to the solution size~\cite{refBDT11,refMR14}, but we are not aware of any non-trivial FPT result when the parameter to be considered is $k$. Recall that a problem parameterized by some parameter $p$ is FPT with respect to $p$ if it admits an FPT algorithm with respect to $p$, i.e., an algorithm solving it in time $O(f(p) n^c)$, where $f(\cdot)$ is some computable function of $p$, $n$ is the input size, and $c$ is a constant independent of $p$ \cite{refDowney99}.

Let us now turn to the case where $G$ is planar. On the one hand, when all sources and sinks lie on the outer face, it was proved that, unlike \textsc{MinMC}, \textsc{MinMTC} can be solved in polynomial time, even when $\vert \mathcal{T} \vert$ is part of the input~\cite{refChen04}. On the other hand, when sources and sinks can lie anywhere, it was proved in~\cite{refMarx12} that, under the \emph{Exponential Time Hypothesis} (ETH), \textsc{MinMTC} cannot be FPT with respect to $\vert \mathcal{T} \vert$ in planar graphs. Hence, under the same hypothesis, \textsc{MinMC} cannot be FPT with respect to $k$ in these graphs. However, when $k = O(1)$, it was proved that \textsc{MinMC} is polynomial-time solvable in planar graphs if all sources and sinks lie on the outer face~\cite{refBentzDAM09}, and later it was proved that \textsc{MinMC} remains polynomial-time solvable in planar graphs even when sources and sinks can lie anywhere~\cite{refBentzIPEC12,refECDV17}. (It was already known for \textsc{MinMTC} in planar graphs when $\vert \mathcal{T} \vert = O(1)$~\cite{refDahlhaus94}.)

In the present paper, we prove the first non-trivial FPT result concerning \textsc{MinMC} parameterized by $k$ only, and at the same time settle the last case left open by the results shown in~\cite{refBentzDAM09,refBentzIPEC12,refECDV17,refMarx12}. Namely, we show that \textsc{MinMC} is FPT with respect to $k$ when $G$ is planar and all sources and sinks lie on its outer face. In order to do this, we provide a new characterization of optimal solutions for \textsc{MinMC} in such graphs, on which our algorithm is based.

In~\cite{refBentzDAM09}, it was proved that any \textsc{MinMC} instance in such a graph can be reduced to a set of \textsc{MinMTC} instances \emph{in planar graphs} (i.e., where sources and sinks can lie anywhere). Actually, a limited number of configurations were enumerated, and for each configuration \emph{one} planar \textsc{MinMTC} instance was solved (using a non-FPT algorithm, such as the one in~\cite{refDahlhaus94}).

Here, we prove a stronger result: there exist optimal multicuts such that some part (i.e., some of the edges) of such a solution actually defines an optimal solution for a \textsc{MinMTC} instance, obtained in the same graph by removing some of the sources and sinks (or, equivalently, by keeping only some of them). (In practice, determining the sources and sinks that belong to this \textsc{MinMTC} instance requires some enumeration, but fortunately it can be done in FPT time.) In the \textsc{MinMTC} instance obtained in this way, \emph{all} terminals lie on the outer face (and hence we can use the polynomial-time algorithm given in~\cite{refChen04}). Moreover, removing the edges of the optimal solution for this \textsc{MinMTC} instance cuts the initial graph into several pieces (i.e., connected components), which can then be solved recursively as smaller \textsc{MinMC} instances satisfying the same assumptions as the initial one.

The proposed algorithm is thus based on a \emph{divide-and-conquer} approach: we enumerate a limited (but larger than in~\cite{refBentzDAM09}, as we will need to ``guess'' slightly more information) number of configurations, and for each one we solve \emph{a set} of planar \textsc{MinMTC} instances (and not \emph{one} planar \textsc{MinMTC} instance anymore), in which, unlike in~\cite{refBentzDAM09}, \emph{all} terminals lie on the outer face. This enables us to obtain a nearly linear-time algorithm when $k=O(1)$.

When considering any planar \textsc{MinMC} (or \textsc{MinMTC}) instance, loops and parallel edges are useless (loops can be removed, and edges having the same endpoints can be merged into a single edge, whose weight is the sum of the weights of the merged edges), and connected components actually define independent instances, so we shall assume without loss of generality that the input graph is connected and contains neither loops nor parallel edges, and that it is already embedded in the plane without crossings (and with all the sources/sinks or terminals lying on the outer face, if needed).

Furthermore, as in~\cite{refBentzDAM09} (where all the necessary details are provided), we assume without loss of generality that all terminals are distinct and that the input graph is 2-vertex-connected (which can easily be achieved in linear time by doubling all edge weights, then obtaining a 2-edge-connected graph, and finally replacing any articulation vertex by a cycle). This means, in particular, that the boundary of the outer face is a simple cycle.

\section{A reformulation using clusterings}\label{sectClustering}

The starting point of our FPT algorithm is basically the same as in~\cite{refBentzIPEC12}: in any connected graph (planar or not), removing the edges of any optimal multicut yields several connected components, each of them containing at least one source or sink. The sources/sinks belonging to a same connected component define a \emph{cluster}. We shall call such a set of clusters a \emph{clustering} (a clustering is thus a partition of the sources and sinks), and we shall say that the considered solution \emph{induces} these clusters (and the connected components containing them), or equivalently this clustering.

Hence, finding an optimal multicut is equivalent to finding a set of edges of minimum total weight that isolates all the clusters of the clustering induced by this optimal multicut. By definition, this clustering is such that no cluster contains both $s_i$ and $s'_i$ for each $i$. In practice (i.e., from an algorithmic point of view), since we do not know this clustering as long as we do not know the optimal multicut itself, we need to enumerate all possible clusterings in order to ensure that the one induced by the optimal solution we are looking for will be considered as well.

The number of possible clusterings only depends on the number $k$ of source-sink pairs: in other words, it is FPT with respect to $k$ (we shall give more details later). This immediately implies that \textsc{MinMC} can be reduced in FPT time (with respect to $k$) to the following problem, called the \emph{minimum multi-cluster cut problem} (\textsc{MinMCC}): given $k'$ sets (or clusters) of terminals $\mathcal{T}_1, \mathcal{T}_2, \dots, \mathcal{T}_{k'}$ in an edge-weighted graph $G$, find a set of edges of minimum total weight in $G$, in such a way that removing these edges leaves no path between any vertex in $\mathcal{T}_i$ and any vertex in $\mathcal{T}_j$, for any $i \neq j$. When $\vert \mathcal{T}_i \vert = 1$ for each $i$, \textsc{MinMCC} simply turns into \textsc{MinMTC}. Also note that \textsc{MinMCC} is actually a special case of \textsc{MinMC}, in which the source-sink pairs are $(u,v)$, for each $i \neq j$ and each $u \in \mathcal{T}_i$ and $v \in \mathcal{T}_j$.

In the remainder of this paper, we shall focus on solving \textsc{MinMCC} in FPT time (with respect to $k'$), and this will immediately enable us to solve \textsc{MinMC} in FPT time (with respect to $k$) as well. Indeed, it is not hard to see that the above-mentioned reduction from \textsc{MinMC} to \textsc{MinMCC} is actually an FPT-reduction, as we have $k' \leq 2k$. One can even show a sharper bound on $k'$: we have actually $k' \leq k+1$, and this bound is \emph{tight} (to see this, simply consider a chain with $2k$ vertices, in the order $s_1,s'_1,s_2,s'_2,\dots,s_k,s'_k$, where the only optimal multicut induces $k+1$ clusters). This will not have any significant impact on the asymptotic running time of our algorithm, but we give a short proof of this fact anyway, for the sake of completeness.

Assume by contradiction that, in a given instance of \textsc{MinMC}, there exists an optimal multicut inducing at least $k+2$ connected components. Any edge of this optimal multicut lies between two of these connected components, and there must exist an $i$ such that one component contains $s_i$ and the other $s'_i$ (otherwise, this edge would be useless). Hence, when adding to the (at least) $k+2$ connected components induced by such an optimal multicut the edges lying between the two connected components containing $s_1$ and $s'_1$ (if there exist such edges), we obtain at least $k+1$ connected components. We do the same for $s_2$ and $s'_2$ (adding edges of the optimal multicut only if they have not already been added so far), and then for each $s_i$ and $s'_i$ for $i$ from $3$ to $k$. In the end, we have \emph{all} the edges from the initial graph, and we have reduced by \emph{at most} $k$ the number of connected components. Therefore, there remain at least two connected components, which contradicts the fact that the initial graph was connected.

\section{A characterization using planar duality}\label{sectPlanarDuality}

Let us now consider planar duality. It is well-known that, to any planar graph $G$ embedded in the plane, one can associate a \emph{dual} planar graph $G^*$. More precisely, to any face in $G$ corresponds a vertex (called a \emph{dual} vertex) in $G^*$, and any (\emph{dual}) edge between two dual vertices corresponds to the edge (or one of the edges, if there are more than one) shared by the corresponding faces in $G$. This also holds for the outer face of $G$. Furthermore, since $G$ is 2-vertex-connected, there exists no edge belonging to only one face.

Given an optimal solution $S$ for a \textsc{MinMCC} instance in a planar graph $G$, we shall denote by $S^*$ the set of dual edges corresponding to $S$ in the dual graph $G^*$, and, for each $i$, by $V_i$ the set of vertices of the $i$th connected component induced by $S$, and by $S_i \subseteq S$ the set of edges of $G$ having exactly one endpoint in $V_i$. It is well-known that $S^*_i$, the set of dual edges corresponding to each $S_i$, is a set of (non necessarily simple) cycles in $G^*$.

Moreover, if we look at the embedding of $G^*$ as a set of curves in the plane (which intersect at the dual vertices), then each $S^*_i$ is represented by a set of closed curves, denoted by $C^*_i$. Each $S^*_i$ is actually composed of one or several simple cycles $\{S^{*1}_i, S^{*2}_i, \dots\}$, i.e., each $C^*_i$ is composed of one or several simple closed curves $\{C^{*1}_i, C^{*2}_i, \dots\}$. By the \emph{Jordan curve theorem}, each such simple closed curve divides the plane into an interior region and an exterior region (a \emph{region} being a set of points such that any two of these points can be linked by a curve without crossing any closed curve): intuitively, the interior region is the region of the plane that \emph{is enclosed by} (or that \emph{lies inside}) this closed curve. The $C^*_i$'s thus partition the plane into several regions. Among all these regions, there is one and only one that is unbounded (and one and only one $C^*_i$ is \emph{associated with it}, i.e., contains all the curves adjacent to this region). Actually, the $C^*_i$'s may be seen as defining the boundary of the $V_i$'s, that form a partition of the vertex set of $G$, and hence the interior regions of any two distinct simple closed curves that compose them cannot overlap (except if one lies inside the other).

The following lemma summarizes well-known facts for planar \textsc{MinMCC}:

\begin{Lemma}[\cite{refBentzIPEC12,refECDV17,refDahlhaus94}]\label{lemma1:folklore}
Given a \textsc{MinMCC} instance in a planar graph $G$, any optimal solution $S$ for this instance satisfies the following properties:
\begin{itemize}
\item[$(1)$] for each $C^*_i=\{C^{*1}_i, C^{*2}_i, \dots\}$ and for any $j_1 \neq j_2$, either the interior regions of $C^{*j_1}_i$ and $C^{*j_2}_i$ are disjoint, or one lies inside the other,
\item[$(2)$] each $C^*_i=\{C^{*1}_i, C^{*2}_i, \dots\}$, except the one associated with the unbounded region, contains one simple closed curve, say $C^{*1}_i$, such that all the vertices of $V_i$ lie inside $C^{*1}_i$, but not inside $C^{*j}_i$ for any $j \geq 2$,
\item[$(3)$] for each $C^*_i$, except the one associated with the unbounded region, and for each $j \geq 2$, $C^{*j}_i$ lies inside $C^{*1}_i$, and, for each $j_1 \geq 2$ and $j_2 \geq 2$ with $j_1 \neq j_2$, the interior regions of $C^{*j_1}_i$ and $C^{*j_2}_i$ are disjoint.
\end{itemize}
\end{Lemma}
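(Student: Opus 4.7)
The plan is to deduce all three properties from elementary plane topology applied to the connected subgraph $V_i$ and its embedding. For each $i$, let $R_i$ denote the connected component of $\mathbb{R}^2 \setminus \bigcup_j C^{*j}_i$ that contains the embedded vertices of $V_i$. Since $V_i$ is connected in $G$ and no curve in $C^*_i$ crosses an edge internal to $V_i$ (the curves of $C^*_i$ being duals only of cut edges of $S_i$), $R_i$ is well-defined. Moreover, every dual edge of $S^*_i$ separates $V_i$ from some other $V_{i'}$ and hence lies on $\partial R_i$, while conversely $\partial R_i \subseteq \bigcup_j C^{*j}_i$. Thus the $R_j$'s partition the plane up to boundary, and exactly one of them, say $R_\ell$, is unbounded.

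For Property~(1), I would argue that within a single $C^*_i$ no two simple closed curves can have properly overlapping interiors. Two curves $C^{*j_1}_i, C^{*j_2}_i$ with such an overlap would be forced to cross, which would mean the underlying cycles $S^{*j_1}_i, S^{*j_2}_i$ share a dual vertex (a face of $G$). A local rerouting inside that shared face yields an alternative decomposition of $S^*_i$ into edge-disjoint simple cycles in which the two offending curves become disjoint or nested; iterating this operation produces a decomposition satisfying Property~(1). One can also argue more directly that, since the $C^{*j}_i$ form the topological boundary of the single connected region $R_i$, a proper overlap would force $R_i$ to lie on both sides of one of the boundary curves, contradicting connectedness.

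For Property~(2) when $i \neq \ell$, the region $R_i$ is bounded, so by standard plane topology its boundary has a unique outermost simple closed curve, which we designate $C^{*1}_i$: it contains $R_i$ in its interior, while the remaining boundary curves enclose the bounded complementary components (``holes'') of $R_i$. All vertices of $V_i$ lie in $R_i \subseteq \mathrm{int}(C^{*1}_i)$, whereas none lies in the interior of any $C^{*j}_i$ with $j \geq 2$, since those interiors are disjoint from $R_i$. For Property~(3), each inner curve $C^{*j}_i$ ($j \geq 2$) bounds a distinct hole of $R_i$, so any two such curves have disjoint interiors, and all of them lie inside $C^{*1}_i$. The excluded case $i = \ell$ corresponds to $R_i$ being unbounded and therefore having no outermost curve.

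The main obstacle I foresee is the local uncrossing step needed for Property~(1): one must verify that the decomposition of $S^*_i$ into simple cycles can be chosen so that the associated closed curves are pairwise non-crossing in the plane, which requires careful handling of the cyclic orderings of dual edges at shared dual vertices (faces of $G$). Once this is settled, Properties~(2) and~(3) follow immediately from the standard structure of the boundary of a bounded connected region of the plane, as already detailed in~\cite{refBentzIPEC12,refECDV17,refDahlhaus94}.
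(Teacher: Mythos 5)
Your proposal is correct in spirit and, for Properties~(2) and~(3), takes a genuinely different route from the paper. The paper's proof is a light reconciliation of facts from the cited references: it takes Property~(1) as ``clear from \cite{refDahlhaus94}'', then derives Property~(2) by observing that the region containing $\mathcal{T}_i$ must lie inside exactly one of the interior regions (using Property~(1) plus an implicit optimality argument to discard a nested, hence ``useless'', enclosing curve), and finally deduces Property~(3) as a consequence of Property~(2). You instead set up the open region $R_i$ containing $V_i$, show $\partial R_i = \bigcup_j C^{*j}_i$, and then read off Properties~(2) and~(3) from the standard structure of the boundary of a bounded connected planar region (unique outermost component, remaining components bounding pairwise-disjoint holes). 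This is a cleaner and more uniform derivation of~(2) and~(3), and it avoids the paper's optimality-based ``useless curve'' step. What you lose is brevity: the paper simply delegates the hard part (Property~(1), i.e., non-crossing of the chosen simple-cycle decomposition) to the references, while you attempt to argue it and correctly flag the local uncrossing step at shared dual vertices as the main obstacle.

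One caution: your ``more direct'' alternative argument for Property~(1) — that a proper overlap ``would force $R_i$ to lie on both sides of one of the boundary curves, contradicting connectedness'' — does not quite work as stated. Connectedness of $R_i$ alone does not forbid two boundary curves from crossing (one of the four wedges at a crossing could perfectly well be a connected $R_i$). The correct topological contradiction is the stronger fact you established, namely $\partial R_i = \bigcup_j C^{*j}_i$: at a genuine crossing of two curves in the decomposition, no single region is adjacent to all four incident arc-germs, so some dual edge of $S^*_i$ would fail to lie on $\partial R_i$, contradicting that every edge of $S_i$ has exactly one endpoint in $V_i$. (Equivalently, one can just pick the decomposition of $S^*_i$ into the facial boundary walks of $R_i$, which are automatically non-crossing, rather than uncrossing an arbitrary cycle decomposition.) With that repair, the whole plan goes through.
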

\begin{proof}
Each of these three properties has been more or less explicitly proved or used for solving \textsc{MinM(T)C} or \textsc{MinMCC} in planar graphs in~\cite{refBentzIPEC12,refECDV17,refDahlhaus94}. Property $(1)$ is clear from~\cite{refDahlhaus94}, and we will justify the other two briefly.

Concerning Property $(2)$, it was noticed in~\cite{refDahlhaus94} for \textsc{MinMTC} (and it can easily be extended to \textsc{MinMCC}) that, except for the $C^*_i$ associated with the unbounded region, each $C^*_i$ must enclose a region containing the terminals in $\mathcal{T}_i$. By definition, this region lies inside either the interior region or the exterior region associated with each $C^{*j}_i$. Obviously, it cannot lie inside several of these interior regions: indeed, from Property $(1)$, for any two of these interior regions, either they are disjoint (and thus it is clearly not possible), or one lies inside the other (and thus this other one is useless). Therefore, for each $i$, the region containing the terminals in $\mathcal{T}_i$ lies inside one of these interior regions (the one associated with $C^{*1}_i$) and outside all the other interior regions (or, equivalently, inside all the other exterior regions).

Concerning Property $(3)$, it is sufficient to notice that, from Property $(2)$, the only way for the interior region of $C^{*1}_i$ to intersect the exterior region of $C^{*j}_i$ for any $j \geq 2$ is to have $C^{*j}_i$ lying inside $C^{*1}_i$ for any $j \geq 2$. Moreover, this also implies that for any $C^{*j_1}_i$ and $C^{*j_2}_i$ with $j_1 \geq 2$, $j_2 \geq 2$ and $j_1 \neq j_2$, the interior region of one cannot lie inside the interior region of the other, and hence, from Property $(1)$, they must be disjoint.
\end{proof}

For the special case considered here, we can prove the following lemma:

\begin{Lemma}\label{lemma2:properties}
Given a \textsc{MinMCC} instance in a planar graph $G$ where all the terminals lie on the outer face, any optimal solution $S$ is such that each $S^{*j}_i$ contains the dual vertex corresponding to the outer face of $G$, and, for each $i$, any two $S^{*j}_i$'s have only this vertex in common.
\end{Lemma}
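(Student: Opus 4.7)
My plan is to argue by contradiction from a single structural fact about any optimal $S$ (taken WLOG to be of minimum cardinality among optima, to rule out ties with zero-weight edges): every component $V_l$ whose region is bounded in the dual partition contains at least one terminal. Indeed, a bounded terminal-free $V_l$ could be merged into the component directly enclosing it by deleting the edges of $S^1_l$, which preserves feasibility (the merged region gains no new terminal, hence no new inter-cluster path) and strictly decreases the weight, contradicting optimality.

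For the first assertion, suppose for contradiction that some simple cycle $S^{*j}_i$ avoids $v^*_0$. Then no edge of the primal cut $S^j_i$ lies on $C$, and since $C^{*j}_i$ is a bounded simple closed curve while $v^*_0$ sits in the unbounded region of the plane, $v^*_0$ is in the exterior of $C^{*j}_i$. Moreover, for every $v \in C$ the dual face $F_v$ has $v^*_0$ on its boundary, and its interior is disjoint from $C^{*j}_i$ (dual cycles run only along boundaries of dual faces), so $F_v$---and with it $v$---lies on the same side of $C^{*j}_i$ as $v^*_0$. Hence every terminal is in the exterior of $C^{*j}_i$. A case analysis based on whether $V_i$ is inside or outside $C^{*j}_i$ and on whether $V_i = V_{i_0}$ (combined with Property (2) of Lemma~\ref{lemma1:folklore} and the fact that $V_{i_0}$ must lie in the exterior of every bounded dual curve) closes each case with a contradiction to the structural fact above: either $V_i$ is bounded and terminal-free, or the interior of $C^{*j}_i$ harbours a bounded terminal-free component.

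For the second assertion, suppose two simple cycles $S^{*j_1}_i, S^{*j_2}_i$ of $S^*_i$ share a dual vertex $w^* \neq v^*_0$. By Properties (1) and (3) of Lemma~\ref{lemma1:folklore} their interior regions are either nested or disjoint; in either case the four dual edges of the two cycles incident to $w^*$ cannot interleave around $w^*$ (interleaving would force the interiors to cross), so they split into two consecutive pairs around $w^*$. Primally, this translates into a four-arc pattern on the boundary of the face $w$ dual to $w^*$, whose existence I would exploit to locally modify $S$ around $w$---either by uncrossing the two cycles through a re-pairing of the four $v^*_0$-to-$w^*$ dual paths, or by reassigning the vertices of one interstitial arc of $\partial w$ to a neighbouring component and dropping the now-redundant cut edges---in order to obtain a feasible multicut of strictly smaller weight, or of equal weight but with strictly fewer shared non-$v^*_0$ dual vertices so that induction on that count finishes the argument.

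The main obstacle, in my estimation, will be making this local re-routing argument for the second assertion fully rigorous across all cluster configurations of the two regions bounded by $S^{*j_1}_i$ and $S^{*j_2}_i$: when those regions contain terminals from different clusters a naive merge would violate feasibility, so one will need to exploit planarity and the combinatorial structure of \textsc{MinMCC}-optimal multicuts on graphs with all terminals on the outer face together, in order to extract a strict weight decrease (or a weight-preserving simplification) in every case.
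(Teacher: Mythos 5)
Your argument for the first assertion (each $S^{*j}_i$ contains the outer-face dual vertex) follows essentially the same route as the paper --- a simple cycle in $G^*$ avoiding $v^*_0$ bounds a region containing no terminal, hence deleting $S^j_i$ cannot destroy feasibility --- and your preliminary "structural fact" together with the case analysis on whether $V_i$ lies inside or outside $C^{*j}_i$ is a reasonable way to make the paper's one-line "it would be useless" rigorous. So far so good.

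The second assertion is where you diverge, and where the gap lies. You correctly begin with the topological observation that the four dual edges of $S^{*j_1}_i$ and $S^{*j_2}_i$ incident to a shared vertex $w^*\neq v^*_0$ cannot interleave (interleaving would force the two interior regions to cross, contradicting Property~$(1)$ of Lemma~\ref{lemma1:folklore}), but you then abandon the purely topological thread and try to close the argument via a local uncrossing / exchange step on the primal face $w$, aiming at a strict weight decrease or an induction on the number of shared dual vertices. You yourself flag that this is not worked out, and indeed the difficulty you anticipate is real: when terminals from several clusters sit in the interstitial regions around $w$, a naive local re-pairing or reassignment of vertices can break feasibility, and controlling this across all configurations is delicate. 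The paper avoids all of this. Its proof of the second assertion is not an exchange argument at all, and does not use the outer-face hypothesis, nor optimality: it invokes the purely planar-topological fact that the simple closed curves forming the boundary of a \emph{single} connected region of the plane (here, the region swept out by the connected component $V_i$) can pairwise share at most one point --- two shared points would force a pinch that disconnects the region, contradicting connectivity of $V_i$. Combining this general "at most one" fact with the first assertion (every $S^{*j}_i$ passes through $v^*_0$) immediately yields "exactly one," namely $v^*_0$. So the fix to your attempt is not to push the exchange argument through, but to replace it entirely with that topological lemma; your non-interleaving observation is the right first step of that lemma's proof, and you should continue it topologically rather than switching to an optimality-based modification of $S$.
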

\begin{proof}
We begin by proving the first part of the statement. If some $S^{*j}_i$ did not contain the dual vertex corresponding to the outer face of $G$, then the interior region of $C^{*j}_i$ would not enclose any terminal (as any terminal lies on the outer face of $G$), and hence it would be useless in an optimal solution.

Moreover, it is easy to see that, for each $i$, any two $S^{*j}_i$'s have at most one vertex in common, even in the case where the terminals can lie anywhere in the planar graph (and hence, in our special case, they have exactly one vertex in common). Indeed, if two $S^{*j}_i$'s had two or more vertices in common, then they could not belong to the boundary of a single region.
\end{proof}

Lemma \ref{lemma2:properties} implies, in particular, that, if the input graph is planar and any terminal lies on the outer face, then \emph{any} $C^*_i$, including the one associated with the unbounded region, actually consists of a \emph{single} closed curve (that may not be simple). Let us assume without loss of generality that the closed curve associated with the unbounded region is $C^*_1$. From Lemma \ref{lemma1:folklore}, we call a cluster $\mathcal{T}_i$ with $i \geq 2$ a \emph{top cluster} if there is no $j \geq 2$ with $j \neq i$ such that $C^{*1}_i$ lies inside $C^{*1}_j$. (Besides, $\mathcal{T}_1$ will be referred to as a top cluster as well.)

This notion can be interpreted in the initial graph $G$ as well. For each $i$ and each $j$, let $S^j_i$ be the set of edges in $G$ associated with $S^{*j}_i$. Removing from $G$ the edges of any $S^1_i$ for $i \geq 2$ yields two connected components: one that contains the vertices in $\mathcal{T}_i$ (and possibly vertices from other clusters), and one that does not. We shall denote the former one by $V'_i$: we have $V_i \subseteq V'_i$ for each $i$. Then, $\mathcal{T}_i$ with $i \geq 2$ is a top cluster if $V'_i$ is not contained in any $V'_j$ for $j \geq 2$ and $j \neq i$. In other words, the vertices of any $V'_j$ with $j \geq 2$ such that $\mathcal{T}_j$ is not a top cluster are included in some $V'_i$ with $i \geq 2$ and $i \neq j$. As such, any $V'_i$ such that $\mathcal{T}_i$ is a top cluster and $i \geq 2$ can be viewed as a maximal inclusion-wise connected component among the $V'_j$'s.

However, this notion is still not strong enough to state our main result. We refine it as follows. Take \emph{any} top cluster except $\mathcal{T}_1$ (say, $\mathcal{T}_2$), and define it as a \emph{good} top cluster. Then, we define the other good top clusters iteratively: any top cluster $\mathcal{T}_i$ such that $S^{*1}_i$ with $i \geq 2$ has at least one edge in common with $S^{*1}_j$ for some good top cluster $\mathcal{T}_j$ with $j \geq 2$ will also be defined as a good top cluster. (Besides, $\mathcal{T}_1$ will be defined as a good top cluster as well.)

All the previous notions are illustrated in Figure \ref{fig1}, where the terminals are the small black rectangles, while the other vertices are the small black circles. The clusters are numbered from 1 to 9, and any terminal is labeled by the number of the cluster it belongs to. The dual vertices and edges associated with the optimal solution drawn in Figure \ref{fig1} are respectively the small grey diamonds and the grey dashed lines. The top clusters are numbered 1, 2, 5, 7 and 8, and the good top ones are numbered 1, 2 and 5 (another possible choice would be the ones numbered 1, 7 and 8). Moreover, the big grey diamond is the dual vertex associated with the outer face, and the three $S^{*j}_2$'s are indicated on the associated dual edges, as well as some other $S^{*j}_i$'s.

\begin{figure}
\centerline{\includegraphics[scale=0.8]{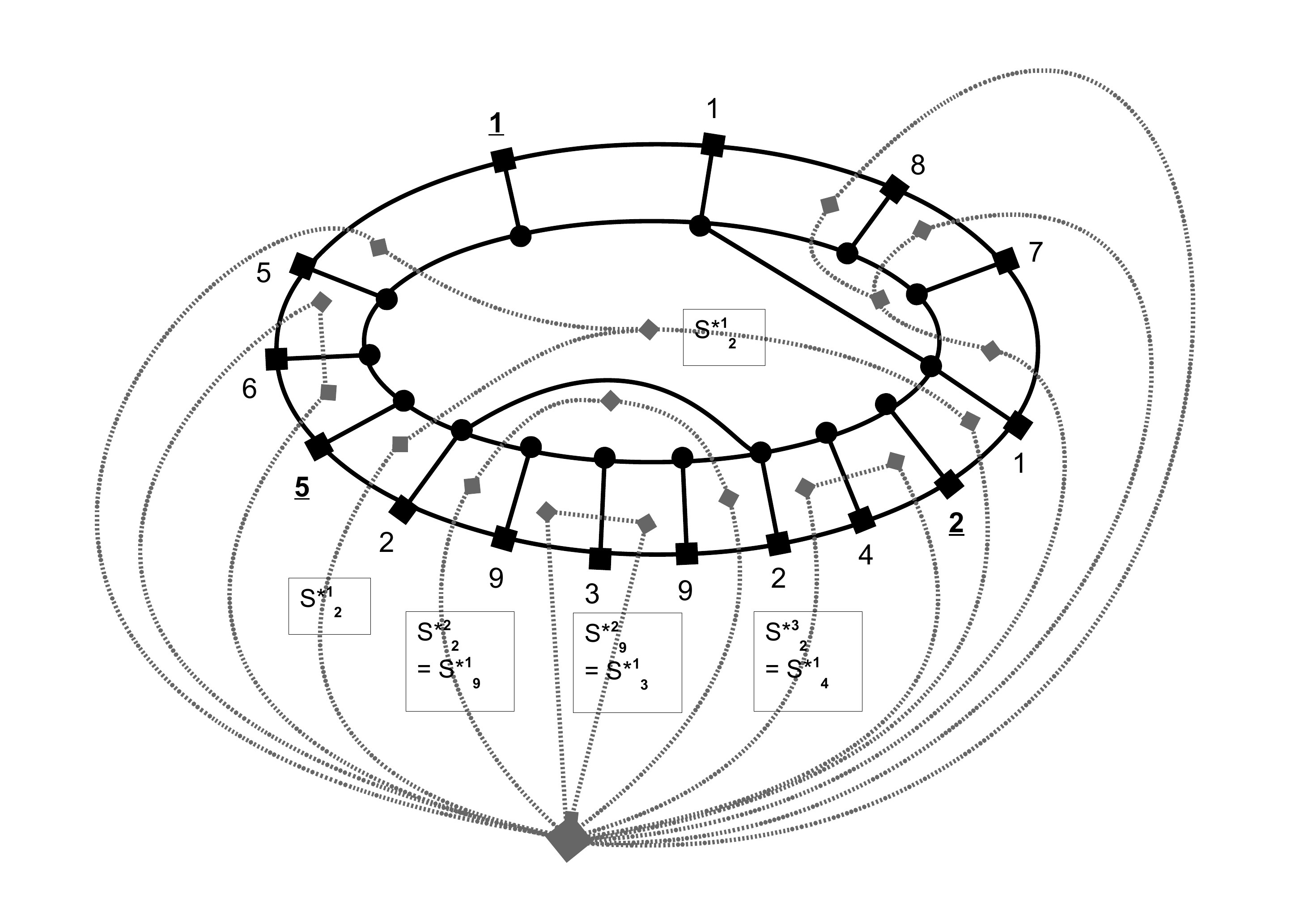}} 
\caption{A planar \textsc{MinMCC} instance and the associated optimal solution, whose edges have weight 1 (while all the other edges have large weights).}
\label{fig1}
\end{figure}

Our main result is the following lemma:

\begin{Lemma}\label{lemma3:main}
Assume we are given a \textsc{MinMCC} instance $I$ in a planar graph $G$ where any terminal lies on the outer face, and let $S$ be an optimal solution for $I$. Then, $\bigcup_{i:i\text{ is a good top cluster}} S^1_i$ is an optimal solution for the \textsc{MinMTC} instance $I'$ obtained as follows: $(i)$ the input graph $G'$ is the graph $G$ without the edges in $S \setminus \left(\bigcup_{i:i\text{ is a good top cluster}} S^1_i\right)$, and with one terminal for each good top cluster, and $(ii)$ all the terminals lie on the outer face.
\end{Lemma}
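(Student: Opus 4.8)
Write $\mathcal{G}$ for the set of indices of good top clusters and $R=\bigcup_{i\in\mathcal{G}}S^1_i$, so that, by definition, $G'$ is $G$ with the edges of $S\setminus R$ deleted. The plan is to treat feasibility and optimality of $R$ for $I'$ separately. Feasibility is the easy half: since $R\subseteq S$ and deletions of disjoint edge sets commute, $G'$ with the edges of $R$ deleted is exactly $G$ with the edges of $S$ deleted, whose connected components are the $V_i$'s; hence the pairwise distinct representatives chosen for the good top clusters land in distinct components, so $R$ is a feasible \textsc{MinMTC} solution, and since deleting edges only enlarges the outer face these representatives still lie on it, so $I'$ genuinely has the announced form.

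For optimality I would show that $w(S')\ge w(R)$ for every feasible solution $S'$ of $I'$. Since $R\subseteq S$ gives $w(R)=w(S)-w(S\setminus R)$, it is enough to construct, from an arbitrary feasible $S'$ of $I'$, a feasible solution $\widehat S$ of $I$ with $w(\widehat S)\le w(S')+w(S\setminus R)$: optimality of $S$ for $I$ then forces $w(S)\le w(S')+w(S\setminus R)$, i.e.\ $w(R)\le w(S')$. The natural candidate is $\widehat S:=S'\cup(S\setminus R)$; its weight is at most $w(S')+w(S\setminus R)$, and because deleting $\widehat S$ from $G$ produces the same graph as deleting $S'$ from $G'$, the whole statement collapses to the purely combinatorial assertion that, in $G'$, every edge set separating the good top representatives in fact separates every pair of distinct clusters $\mathcal{T}_a,\mathcal{T}_b$.

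To prove that assertion I would use Lemmas~\ref{lemma1:folklore} and~\ref{lemma2:properties} together with the iterative definition of good top clusters. First I would recall that each $S^*_i$ is a single closed curve through the dual vertex of the outer face, that for $i\ge 2$ the curve $S^{*1}_i$ is simple and encloses precisely $V'_i$, and that the curves $S^{*1}_i$ with $i\in\mathcal{G}$ and $i\ge 2$ form a single family connected under the relation of sharing an edge. Then I would show that deleting the edges of $S\setminus R$ from $G$ already cuts off, inside $G'$, the interior $V'_j$ of every non-good-top top cluster $\mathcal{T}_j$ (and, recursively, the interiors of all clusters nested inside $V'_j$) from the rest of the graph: the key point is that $S^{*1}_j$ for such a $\mathcal{T}_j$ shares no edge with any good-top curve $S^{*1}_i$ with $i\ge 2$ — this is exactly the failure of the iterative rule that keeps $\mathcal{T}_j$ from being good top — so, provided it also shares no edge with the boundary curve $S^{*1}_1$ of the unbounded cluster $\mathcal{T}_1$, none of its edges lies in $R$ and it survives intact in $S\setminus R$. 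Given this, once $S'$ separates the good top representatives in $G'$ each remaining cluster is confined to a component of $G'$ (after deleting $S'$) that contains no other cluster, so $\widehat S$ is feasible for $I$. The hard part will be this cutting-off step, and within it the interaction between a non-good-top top cluster and $S^{*1}_1$, whose edges do belong to $R$: ruling out the configuration in which such a cluster borders $V_1$ along $S^{*1}_1$ — using the assumption that all terminals lie on the outer face (which forbids a cluster from being entirely enclosed by $V_1$) together with the nesting structure of Lemma~\ref{lemma1:folklore}$(1)$ — is, I expect, the main technical difficulty; once it is settled, the remaining bookkeeping is routine and the lemma follows.
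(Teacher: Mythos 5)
Your overall strategy --- feasibility of $R=\bigcup_i S^1_i$, then the exchange argument $\widehat S := S'\cup(S\setminus R)$ combined with the optimality of $S$ for $I$ --- is exactly the paper's, and you correctly isolate the key combinatorial fact (that deleting $S\setminus R$ already cuts off every cluster that is not a good top one). But two points in your plan do not go through. First, you misread the instance $I'$: the ``one terminal for each good top cluster'' is not a representative chosen among the existing terminals of $\mathcal{T}_i$, but a \emph{new} cluster vertex joined by edges of very large weight to \emph{every} terminal of $\mathcal{T}_i$. Under your representative reading, the assertion your proof ``collapses to'' --- that any edge set of $G'$ separating the representatives separates every pair of distinct clusters --- is simply false: a feasible $S'$ for the representative instance may leave two non-representative terminals of distinct good top clusters connected (on a path $r_a,t_a,t_b,r_b$ it suffices to cut the two extreme edges), so $\widehat S$ need not be feasible for $I$ and the inequality $w(S')\ge w(R)$ can fail. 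The cluster-vertex gadget is what forces any feasible solution of $I'$ to separate the clusters wholesale. This misreading also makes condition $(ii)$ look free, whereas showing that the new cluster vertices can be placed on the outer face is a substantial part of the paper's proof: it needs the consecutiveness, on the outer face, of the terminals of each good top cluster among all terminals of good top clusters (this is precisely what the iterative definition of ``good'' buys, including for $\mathcal{T}_1$), the notion of a ``first terminal'' of each good top cluster, and non-crossing homotopic cluster curves. None of this appears in your plan, yet it is also the part the algorithm later has to ``guess''.

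Second, in your cutting-off step you propose to \emph{rule out} the configuration in which a top-but-not-good cluster $\mathcal{T}_j$ borders $V_1$. That configuration cannot be ruled out: it occurs (clusters 7 and 8 in the paper's Figure 1 are top, not good, and adjacent to $V_1$). The resolution is definitional rather than geometric: the iterative rule only declares $\mathcal{T}_j$ good when $S^{*1}_j$ shares an edge with $S^{*1}_i$ for a \emph{good} top cluster with $i\ge 2$, so sharing edges with the boundary of $V_1$ does not make $\mathcal{T}_j$ good, and the paper's proof establishes (and relies on) the fact that every edge of $R$ lies between two components of good top clusters --- in particular the edges between $V_1$ and such a $V'_j$ are not in $R$, hence are deleted in $G'$ and $V'_j$ is cut off after all. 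Combined with Lemma~\ref{lemma2:properties} (the simple cycles $S^{*j}_i$ of a single cluster are pairwise edge-disjoint, so the inner boundaries $S^j_i$, $j\ge 2$, share no edge with $S^1_i$), this gives the isolation property you need; the argument you sketch instead would get stuck trying to exclude a configuration that is actually present.
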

\begin{proof}
Consider the planar graph $G'$ as defined above, and assume that the embedding of $G'$ is computed by taking the one of $G$ and then simply removing the edges in $S \setminus \left(\bigcup_{i:i\text{ is a good top cluster}} S^1_i\right)$. Clearly, if, for each good top cluster $\mathcal{T}_i$, we add in $G'$ a new vertex (called a \emph{cluster vertex}) linked by an edge (called a \emph{cluster edge}) having a sufficiently large weight to each terminal of $\mathcal{T}_i$ (assume for now that it can be done in such a way that $(ii)$ holds), then, by the definitions of $S$ and $G'$, removing the edges in $\bigcup_{i:i\text{ is a good top cluster}} S^1_i$ leaves no path between any two cluster vertices. In other words, $\bigcup_{i:i\text{ is a good top cluster}} S^1_i$ is actually a feasible solution to $I'$.

Moreover, any edge in $\bigcup_{i:i\text{ is a good top cluster}} S^1_i$ lies between two connected components associated with two good top clusters. Indeed, on the one hand, by definition of a good top cluster, such an edge cannot belong to $S^{*1}_i$ (and hence, from Property $(3)$ in Lemma \ref{lemma1:folklore}, to $S^*_i$) for some top cluster $\mathcal{T}_i$ which is not good. On the other hand, from Property $(3)$ in Lemma \ref{lemma1:folklore} and the definition of a top cluster, a connected component associated with a non top cluster lies inside a closed curve of the form $C^{*j}_i$ for some $j \geq 2$, where $i \geq 2$ is such that $\mathcal{T}_i$ is a top cluster. From Lemma \ref{lemma2:properties}, the set of edges $S^{*j}_i$ associated with such a closed curve has one and only one vertex in common with any other $S^{*h}_i$, and hence, in particular, $S^{*j}_i$ shares no edge with $S^{*1}_i$.

This implies that in $G'$ there is no path from any terminal of any good top cluster to any other terminal, except for the terminals of any other good top cluster. Therefore, replacing $\bigcup_{i:i\text{ is a good top cluster}} S^1_i$ in $G'$ by any feasible solution to $I'$, whose total weight does not exceed the one of $\bigcup_{i:i\text{ is a good top cluster}} S^1_i$ and that contains no cluster edge, yields another feasible solution to $I$ in $G$ which is at least as good as $S$. As $S$ is an optimal solution to $I$, and as any optimal solution to $I'$ contains no cluster edge (their common weight being too large), this implies in particular that $\bigcup_{i:i\text{ is a good top cluster}} S^1_i$ is an optimal solution to $I'$.

It remains to prove the last part of the lemma: namely, let us prove that we can add the cluster vertices and edges in such a way that $(ii)$ holds.

To do this, we shall proceed in a way similar to the one described in~\cite{refBentzDAM09}: the claim that the cluster vertices can then be assumed to lie on the outer face will simply come from the fact that, unlike in~\cite{refBentzDAM09}, there is not a cluster vertex associated with each cluster, but only with each good top one. To describe our way of achieving this, we shall need some additional definitions.


Clearly, from the definition of a top cluster, all the terminals of such a cluster, \emph{except $\mathcal{T}_1$}, are consecutive on the outer face (among all the terminals of top clusters), as otherwise $C^{*1}_i$ would lie inside $C^{*1}_j$ for two distinct top clusters $\mathcal{T}_i$ and $\mathcal{T}_j$ with $i \geq 2$ and $j \geq 2$, which would be a contradiction. Therefore, if we go through the outer face of the graph $G$ clockwise (which can be done in a well-defined way, as $G$ is 2-vertex-connected), then, for each top cluster $\mathcal{T}_i$ of $S$ with $i \geq 2$, there is a ``first'' terminal of this cluster that is encountered when doing so while staying inside $C^{*1}_i$. In other words, each such top cluster $\mathcal{T}_i$ has a unique terminal (which we shall call the \emph{first} terminal of $\mathcal{T}_i$) from which we can encounter every other terminal of $\mathcal{T}_i$ by going through the outer face of the graph $G$ clockwise and without leaving the interior region of $C^{*1}_i$. From this first terminal, we can then define a unique ordering of the other terminals of $\mathcal{T}_i$, which is simply the order in which they are encountered while going through this outer face clockwise.

We can define the first (and last) terminal of $\mathcal{T}_1$ in a similar way, i.e., as the unique terminal of $\mathcal{T}_1$ from which we can encounter every other terminal of $\mathcal{T}_1$ by going through the outer face of the graph $G$ clockwise and without \emph{entering} the interior region of $C^{*1}_i$ for any good top cluster $\mathcal{T}_i$ with $i \geq 2$. Such a first terminal exists, as otherwise it would mean that, among the terminals in the good top clusters, the terminals in $\mathcal{T}_1$ are not consecutive on the outer face. In other words, it would mean that, for any choice of a first terminal of $\mathcal{T}_1$, there is a good top cluster $\mathcal{T}_i$ with $i \geq 2$ (resp. another good top cluster $\mathcal{T}_j$ with $j \geq 2$ and $j \neq i$) whose terminals are encountered while going clockwise (resp. counterclockwise) from the first terminal of $\mathcal{T}_1$ to its last one on the outer face. However, $S^{*1}_i$ and $S^{*1}_j$ for such $i$ and $j$ could not share an edge (as otherwise either the first or the last terminal of $\mathcal{T}_1$ would lie inside a closed curve belonging to $S$, contradicting the definition of $\mathcal{T}_1$), which would contradict the fact that $\mathcal{T}_i$ and $\mathcal{T}_j$ are good.

Hence, the point of introducing the notion of good top clusters is to extend the consecutiveness property associated with top clusters, \emph{even} when considering the cluster $\mathcal{T}_1$. In other words, all the terminals of \emph{any} good top cluster are consecutive on the outer face among \emph{all} the terminals of good top clusters. The notion of first vertices is illustrated in Figure~\ref{fig1}, where the first vertex of each of the three good top clusters (numbered 1, 2 and 5) in the optimal solution to the considered instance is indicated as follows: the number of the corresponding cluster is underlined and written in bold.


Now we can proceed almost as in~\cite{refBentzDAM09}. For each good top cluster $\mathcal{T}_i$ of $S$ that contains at least two terminals (otherwise, there is nothing to do), we draw a curve (called a \emph{cluster curve}) from the first terminal of $\mathcal{T}_i$ to its last one. Thanks to the consecutiveness property associated with good top clusters, these curves can easily be drawn in such a way that no two of them intersect, and the curve corresponding to each $\mathcal{T}_i$ must be \emph{homotopic}, with respect to the boundary of the outer face of $G$, to the chain $\mu_i$ that goes clockwise from the first terminal of $\mathcal{T}_i$ to its last one, and uses only vertices lying on the boundary of this outer face. As in~\cite{refBentzDAM09}, being homotopic means that it can be continuously transformed into $\mu_i$ without being blocked by the boundary of this outer face while doing so (see also \cite{refECDVE10}).

Then, we let each cluster vertex lie on the associated cluster curve, and add the cluster edges, in such a way that they do not intersect and all lie inside the region bounded by the cluster curve and $\mu_i$. This allows us to conclude that, after adding the cluster vertices and edges as above, all the cluster vertices do lie on the outer face, which ends the proof.
\end{proof}



\section{Crafting the algorithm}\label{sectAlgo}

We now focus on using the results from the two previous sections to come up with an algorithm solving \textsc{MinMC} in time FPT with respect to $k$.

Let $I$ be an instance of \textsc{MinMC} in a planar graph where all sources and sinks lie on the outer face, and let $S$ be an optimal multicut for $I$ ($S$ exists, even if we do not know it yet explicitly).

First of all, we can ``guess'' the clustering associated with $S$ by enumerating \emph{all} the possible clusterings containing at most $k+1$ clusters, and this can be done in time FPT with respect to $k$ (see Section \ref{sectClustering}).

Then, we have to know the structure of the clusters in $S$, i.e., in particular, which clusters are the top ones, and which are the good top ones (see Section \ref{sectPlanarDuality}): we can ``guess'' this structure by using another enumeration (which, again, can be done in time FPT with respect to $k$).

Moreover, recall from the proof of Lemma \ref{lemma3:main} in Section \ref{sectPlanarDuality} that, if we go through the outer face of the input graph $G$ clockwise, then, for each good top cluster $\mathcal{T}_i$ of $S$ with $i \geq 2$, there is a first terminal lying inside $C^{*1}_i$ that is encountered while doing so (a similar notion of a first terminal holds for $\mathcal{T}_1$ as well). As mentioned in this proof, we will need to know this terminal in order to ensure that, in the planar \textsc{MinMTC} instance that we will construct, all the terminals will lie on the outer face.

Again, for each good top cluster $\mathcal{T}_i$, we can ``guess'' such a terminal by enumerating all the possibilities (i.e., trying the $\vert \mathcal{T}_i \vert$ terminals of $\mathcal{T}_i$ one by one). Once we know the first terminal of each such cluster, we can construct a planar \textsc{MinMTC} instance as explained in the proof of Lemma \ref{lemma3:main}.

By solving the above-defined planar \textsc{MinMTC} instance (where all the terminals lie on the outer face), we obtain from Lemma \ref{lemma3:main} a set of edges $S'$ that we can use to replace $\bigcup_{i:i\text{ is a good top cluster}} S^1_i$ in $S$. However, the main drawback of Lemma \ref{lemma3:main} is that it considers a \textsc{MinMTC} instance defined on a graph $G'$ that is obtained from the input graph $G$, but that we do not know explicitly (as it would require to already know some part of an optimal solution). We now show how we can overcome this issue, by considering a \emph{particular} optimal solution to the \textsc{MinMCC} instance we wish to solve:

\begin{Corollary}\label{myCorollary}
Assume we are given a \textsc{MinMC} instance $I$ in a planar graph $G$ where all the sources and sinks lie on the outer face, and consider an optimal solution $S$ for $I$ that induces the maximum number of clusters. Then, $\bigcup_{i:i\text{ is a good top cluster}} S^1_i$ is an optimal solution for the \textsc{MinMTC} instance $I'$ obtained as follows: $(i)$ the input graph is $G$, with one additional terminal for each good top cluster, and $(ii)$ all the terminals lie on the outer face.
\end{Corollary}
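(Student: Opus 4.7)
The plan is to derive the corollary from Lemma \ref{lemma3:main} by a restriction argument. Let $S_T := \bigcup_{i:i\text{ is a good top cluster}} S^1_i$, let $G' := G \setminus (S\setminus S_T)$ be the graph appearing in Lemma \ref{lemma3:main}, and let $I''$ be the MinMTC instance built on $G'$ as in that lemma; the instance $I'$ of the corollary carries the \emph{same} cluster vertices and the \emph{same} heavy cluster edges, but is built on $G$ rather than on $G'$. I would proceed in three steps: first argue that Lemma \ref{lemma3:main} applies to $S$, then verify that $S_T$ is feasible for $I'$, then show $\mathrm{opt}(I'')\leq \mathrm{opt}(I')$. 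Combining these will force $S_T$ to be optimal for $I'$.

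First I would note that, since $S$ is MinMC-optimal, any cheaper edge cut separating the clusters induced by $S$ would also separate all source-sink pairs and contradict the MinMC-optimality of $S$; hence $S$ is optimal for the MinMCC instance defined by its own induced clustering, Lemma \ref{lemma3:main} applies, and $w(S_T)=\mathrm{opt}(I'')$. The maximum-number-of-clusters hypothesis is not directly used in this argument; it simply pins down a canonical $S$ to match the configuration the algorithm will enumerate.

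Next I would verify feasibility. For each good top cluster $\mathcal{T}_i$ with $i\geq 2$, $S^1_i$ is by construction the full edge boundary between $V'_i$ and $V(G)\setminus V'_i$ and lies in $S_T$, so no edge of $G\setminus S_T$ leaves $V'_i$, and all terminals of $\mathcal{T}_i\subseteq V_i\subseteq V'_i$ stay confined to $V'_i$. The sets $V'_j$ for distinct top clusters are pairwise disjoint by the very definition of a top cluster, and $\mathcal{T}_1\subseteq V_1$ lies outside every such $V'_j$ with $j\geq 2$; hence, together with the cluster edges, the cluster vertices of distinct good top clusters end up in distinct connected components of $G\setminus S_T$. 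Then, for the comparison of optima, any feasible cut $E\subseteq E(G)$ of $I'$ restricts to $E\cap E(G')\subseteq E(G')$, which is feasible for $I''$ and of no larger weight, since $G'\setminus(E\cap E(G'))$ is a subgraph of $G\setminus E$ and inherits its separation properties. Combining,
\[
w(S_T)=\mathrm{opt}(I'')\leq \mathrm{opt}(I')\leq w(S_T),
\]
which forces equality and yields part $(i)$.

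Finally, for property $(ii)$, the construction of cluster curves, cluster vertices and cluster edges from the proof of Lemma \ref{lemma3:main} transfers verbatim from $G'$ to $G$: the consecutiveness of the terminals of each good top cluster on the outer face rested only on the cycles $C^{*j}_i$ of $S$ in the planar dual of $G$ and on the definition of good top clusters, and the cluster curves are drawn entirely in the region exterior to the outer boundary of $G$, where the interior edges of $S\setminus S_T$ play no role. The main step I expect to require care is precisely this homotopy placement: since the outer boundary of $G$ is a simple cycle passing through all terminals in the required cyclic order, the cluster curves can be drawn pairwise non-crossing in the outer region of $G$ exactly as in Lemma \ref{lemma3:main}, which places every cluster vertex on the outer face of $G$ and completes the proof.
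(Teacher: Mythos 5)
Your proof is correct, but it follows a genuinely different route from the paper's, and the difference is worth noting. The paper's one-paragraph proof argues indirectly: it claims that any optimal solution to $I'$ cannot share an edge with $S \setminus \bigl(\bigcup_{i} S^1_i\bigr)$, because otherwise (via the replacement argument inside the proof of Lemma~\ref{lemma3:main}) one would obtain another optimal multicut for $I$ inducing strictly more connected components, contradicting the choice of $S$. Thus the paper's argument essentially hinges on the maximality hypothesis and simultaneously yields a structural property of \emph{every} optimal solution to $I'$ (namely, that it avoids $S \setminus \bigcup_i S^1_i$), which the algorithm's discussion in Section~\ref{sectAlgo} explicitly relies on when it removes the edges of the computed solution and recurses on the resulting components.

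Your approach instead sandwiches the optimum: you show that $\bigcup_i S^1_i$ is \emph{feasible} for $I'$ (removing each $S^1_i$ disconnects $V'_i$ from the rest of $G$, and the sets $V'_j$ for distinct top clusters are pairwise disjoint by Property~$(1)$ of Lemma~\ref{lemma1:folklore} together with the definition of a top cluster, with $\mathcal{T}_1$ lying in the unbounded piece), giving $\mathrm{opt}(I') \le w\bigl(\bigcup_i S^1_i\bigr) = \mathrm{opt}(I'')$; and you show $\mathrm{opt}(I'') \le \mathrm{opt}(I')$ by restricting any feasible cut of $I'$ to $E(G')$. This is cleaner and more self-contained than the paper's argument, and, as you observe, it does not use the maximum-number-of-clusters hypothesis at all for the statement as written. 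That hypothesis is not vacuous in the paper, however: it is there precisely to obtain the stronger non-intersection property of optimal solutions to $I'$ that Section~\ref{sectAlgo} invokes ("and, furthermore, that does not intersect $S \setminus \bigcup_i S^1_i$"), a property your sandwich argument does not deliver. So your proof establishes the Corollary's literal statement by a more elementary route, while the paper's proof, at the cost of invoking maximality, also prepares the ground for the algorithm's recursion. Both are valid for the statement in question.
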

\begin{proof}
From Lemma \ref{lemma3:main}, we just have to prove that, in this case (i.e., when we consider an optimal solution $S$ to $I$ inducing the maximum number of connected components), we do not have to know $S \setminus \left(\bigcup_{i:i\text{ is a good top cluster}} S^1_i\right)$ explicitly in order to define $I'$. In other words, that any optimal solution to $I'$ does not interact with $S \setminus \left(\bigcup_{i:i\text{ is a good top cluster}} S^1_i\right)$, i.e., does not share any edge with it. If these sets of edges did intersect, then from the proof of Lemma \ref{lemma3:main} this would yield another optimal solution for $I$, that would induce more connected components than $S$ does, contradicting the choice of $S$.
\end{proof}

~\\
Since the first step of our algorithm is to enumerate all the possible clusterings containing at most $k+1$ clusters, we will in particular consider the one associated with such an optimal solution $S$.

By solving the above-defined planar \textsc{MinMTC} instance with $n$ vertices (where all the terminals lie on the outer face), which can be done in time $O(k^3 n + k^2 n \log n)$ thanks to the algorithm proposed in~\cite{refChen04}, we obtain from this corollary a set of edges $S'$ that we can use to replace $\bigcup_{i:i\text{ is a good top cluster}} S^1_i$ in $S$, and, furthermore, that does not intersect $S \setminus \left(\bigcup_{i:i\text{ is a good top cluster}} S^1_i\right)$. Moreover, the graph of this instance is obtained from $G$ simply by adding cluster vertices and edges, and hence we do not have to know $S \setminus \left(\bigcup_{i:i\text{ is a good top cluster}} S^1_i\right)$ explicitly. After removing $S'$ from $G$, we obtain two or more connected components (and hence knowing $S \setminus \left(\bigcup_{i:i\text{ is a good top cluster}} S^1_i\right)$ explicitly or not is irrelevant).

These components, in turn, define smaller \textsc{MinMCC} instances, which can then be solved recursively by using the same strategy as above (without the first step, where we guessed the clustering), in a divide-and-conquer way. Each time an instance is solved, the number of connected components increases by at least one: as $S$ contains at most $k+1$ such components, there is a total of at most $k$ instances to be solved.

Putting all together, we obtain the following algorithm $A_1$, which makes a call to another algorithm, that will be detailed after $A_1$:\\

\underline{\textbf{Algorithm} $A_1$}

\textbf{Input:} A connected planar graph $G$ with $n$ vertices, and a set of $k$ source-sink pairs $(s_1,s'_1), \dots, (s_k,s'_k)$ lying on the outer face of $G$.

\textbf{Output:} An optimal multicut for the input graph.

\begin{itemize}
\item For each clustering containing $2k$ terminals and $\leq k+1$ clusters do:
\begin{itemize}
\item Build the associated planar \textsc{MinMCC} instance, where any terminal lies on the outer face of $G$, and the clusters are $\mathcal{T}_1, \mathcal{T}_2, \dots$,
\item Run Algorithm $A_2(G,\{\mathcal{T}_1, \mathcal{T}_2, \dots\})$, and store its output.
\end{itemize}
\item Output the best feasible solution found.
\end{itemize}

Observe that the input of Algorithm $A_1$ is a \textsc{MinMC} instance, while the input of Algorithm $A_2$ will be a \textsc{MinMCC} instance. If a given clustering is induced by an optimal multicut but does not contain the maximum number of clusters, then the solution computed by $A_1$ \emph{for this cluster} may not be optimal (which simply means that we need to consider another clustering).

Moreover, in Algorithm $A_1$, each call to Algorithm $A_2$ is actually the first call of a series of recursive calls. In other words, Algorithm $A_2$ is a recursive algorithm, that can be described as follows:\\

\underline{\textbf{Algorithm} $A_2$}

\textbf{Input:} A connected planar graph $G$ with $n$ vertices, and a set $\{\mathcal{T}_1, \mathcal{T}_2, \dots\}$ of clusters of terminals, all lying on the outer face of $G$.

\textbf{Output:} An optimal multi-cluster cut for the input graph.

\begin{itemize}
\item For each possible choice of good top clusters among all the clusters of the clustering $\{\mathcal{T}_1, \mathcal{T}_2, \dots\}$, and for each possible choice of first terminals for all these good top clusters, do:
\begin{itemize}
\item Construct and solve the associated planar \textsc{MinMTC} instance, where the terminals, that are the cluster vertices associated with the good top clusters, all lie on the outer face (see Corollary \ref{myCorollary}),
\item Remove from the input graph the edges of the optimal solution computed above, obtaining several connected components $G_1, G_2, \dots$, and then store these edges in the current solution,
\item For each of these connected components $G_i$ that contains the terminals of at least two clusters, run Algorithm $A_2(G_i, \mathcal{T}(G_i))$, where $\mathcal{T}(G_i)$ is the set of clusters whose terminals belong to $G_i$, and then add the associated output to the current solution.
\end{itemize}
\item Output the best feasible solution found.
\end{itemize}

Thanks to the above discussion, it should be clear that Algorithm $A_1$ is correct, and runs in time $O(f(k) n \log n)$, for some function $f(\cdot)$ to be specified, in graphs with $n$ vertices. This running time is actually obtained by multiplying the different factors associated with the successive steps:
\begin{enumerate}
\item Enumerating all the possible clusterings containing $2k$ terminals and at most $k+1$ clusters incurs a factor $O\left(\frac{(k+1)^{2k}}{(k+1)!}\right)$, as noted in~\cite{refDahlhaus94},
\item Enumerating all the possible good top clusters among the (at most $k+1$) clusters of a given clustering, and then all their possible first terminals (among $O(k)$), incurs a factor $O(k 2^{k+1}) = O(k 2^k)$,
\item Solving each planar \textsc{MinMTC} instance with all the terminals lying on the outer face can be done in time $O(k^2 (kn + n \log n))$,
\item Finally, there are at most $k$ such instances to solve.
\end{enumerate}

The overall running time is thus $O\left(k^4 2^k \frac{(k+1)^{2k}}{(k+1)!}(kn + n \log n)\right)$, i.e., it is nearly linear when $k = O(1)$. Hence, we have proved:

\begin{Th}\label{th:main}
In planar graphs where all sources and sinks lie on the outer face, \textsc{MinMC} is FPT with respect to the number $k$ of source-sink pairs.
\end{Th}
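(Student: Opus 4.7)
The plan is to present an explicit divide-and-conquer algorithm whose running time is of the form $O(f(k) n \log n)$. The starting point is the FPT reduction from Section 2: I would enumerate all clusterings of the $2k$ terminals into at most $k+1$ clusters, which costs only $O((k+1)^{2k}/(k+1)!)$, and for each resulting \textsc{MinMCC} instance try to solve it. Among the enumerated clusterings will be the one induced by an optimal multicut $S$ that maximizes the number of induced components, so it suffices to design a subroutine that solves the \textsc{MinMCC} instance correctly at least for this particular clustering.

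For the \textsc{MinMCC} subroutine I would invoke Corollary \ref{myCorollary} recursively. Given a \textsc{MinMCC} instance, I would guess by enumeration two pieces of combinatorial data about the unknown optimal solution: the set of good top clusters (a subset of the at most $k+1$ clusters, giving $O(2^{k+1})$ choices) and, for each good top cluster, its first terminal on the outer face (at most $2k$ candidates per cluster). For every guess I would construct the planar \textsc{MinMTC} instance prescribed by Corollary \ref{myCorollary} by attaching one cluster vertex per good top cluster, connected to its terminals by cluster edges of very large weight and drawn so that the cluster vertices lie on the outer face (using the consecutiveness argument from the proof of Lemma \ref{lemma3:main}), and then solve this \textsc{MinMTC} instance with Chen's algorithm in time $O(k^3 n + k^2 n \log n)$. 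When the guess matches the true good-top structure of $S$, Corollary \ref{myCorollary} guarantees that the returned edges are exactly $\bigcup_{i} S^1_i$ over good top clusters, and in particular they do not intersect $S \setminus \bigcup_{i} S^1_i$.

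Removing these edges from $G$ splits it into connected components; on each component that still contains terminals from at least two clusters I would recurse, without redoing the outer clustering-enumeration step, since the cluster structure is already known on each subinstance. Correctness is preserved because each subinstance is still planar with all its terminals on its outer face, and the restriction of $S$ to a given component remains an optimal \textsc{MinMCC} solution for that piece. For the running-time bound I would observe that $S$ induces at most $k+1$ components, while each recursive call increases the total number of induced components by at least one, so the recursion tree has at most $k$ internal calls in total.

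The main difficulty I expect is justifying that the planar \textsc{MinMTC} instance of Corollary \ref{myCorollary} can be constructed concretely without knowing $S \setminus \bigcup_{i} S^1_i$ explicitly. This is exactly where the choice of $S$ maximizing the number of induced components is needed: any optimal \textsc{MinMTC} solution on the full graph $G$ (with cluster vertices added) must avoid $S \setminus \bigcup_{i} S^1_i$, for otherwise combining it with $S$ would produce another optimal \textsc{MinMC} solution inducing strictly more components, contradicting maximality of $S$. Multiplying the four factors — clustering enumeration, good-top enumeration, \textsc{MinMTC} solve, and recursion breadth — then yields an FPT bound of the claimed form, proving the theorem.
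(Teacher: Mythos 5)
Your proposal reproduces the paper's own algorithm and analysis essentially verbatim: the same clustering enumeration from Section~\ref{sectClustering}, the same guesses of good top clusters and first terminals, the same use of Corollary~\ref{myCorollary} (with $S$ chosen to maximize the number of induced components) to build and solve a planar \textsc{MinMTC} instance via Chen's algorithm, and the same divide-and-conquer recursion bounded by at most $k$ calls. The one phrasing slip — saying the solver returns ``exactly $\bigcup_i S^1_i$'' rather than \emph{some} optimal solution interchangeable with it — is harmless, since you then use only the correct consequence (disjointness from $S \setminus \bigcup_i S^1_i$).
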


\section{Extensions and open problems}

In this paper, we have provided an FPT algorithm for \textsc{MinMC} parameterized by the number $k$ of source-sink pairs, in the case where the input graph is planar and all the sources and sinks lie on the outer face. This algorithm actually runs in $O(n \log n)$ time when $k = O(1)$, where $n$ is the number of vertices of the input graph. In~\cite{refBentzDAM09}, it was proved that the time for solving this problem can be improved to linear when $k=2$, but the proof cannot be generalized to greater values of $k$. Therefore, this set of results leaves as open the following question: does there exist a linear-time algorithm (i.e., running in time $O(n)$ for any $k = O(1)$) in this case?

Moreover, our FPT algorithm can easily be extended to a generalization of \textsc{MinMC}, called \emph{partial} \textsc{MinMC} (or \emph{$k$-multicut problem} \cite{refGNS06}), which asks to select a minimum-weight set of edges whose removal leaves no path from $s_i$ to $s'_i$, for at least a given number of source-sink pairs $(s_i,s'_i)$. Indeed, partial \textsc{MinMC} can be reduced in FPT time to \textsc{MinMC}, by ``guessing'' the subset of source-sink pairs between which there will remain no path in an optimal solution (there are $O(2^k)$ such possible subsets to enumerate).

Let us now consider \textsc{MinMCC}. On the one hand, it is easy to see that \textsc{MinMCC} is polynomial-time solvable in general graphs with two clusters (by reducing it to the minimum cut problem), but Dahlhaus et al. (that call it the \emph{colored multiterminal cut problem}) proved in~\cite{refDahlhaus94} that it is \textbf{NP}-hard in planar graphs, even with only four clusters (and they claimed that this remains true with only three clusters). On the other hand, when the input graph is planar and has all its terminals lying on the outer face, \textsc{MinMTC} (the special case of \textsc{MinMCC} where clusters have size 1) is polynomial-time solvable, even when $\arrowvert \mathcal{T} \arrowvert$ is part of the input \cite{refChen04}, and our FPT algorithm precisely solves \textsc{MinMCC} parameterized by the total number of terminals in such a graph (or, equivalently, parameterized both by the number of clusters \emph{and} by the maximum number of terminals per cluster).

However, when the number of clusters is part of the input, we do not even know the complexity of \textsc{MinMCC} in such a graph. Observe that this question remains open even if there are $O(1)$ terminals in each cluster.

When the number of clusters is viewed as a parameter, one may hope that our approach is able to solve the problem even when there is an arbitrary number of terminals in each cluster, as we only need to ``guess'' the first terminal of each good top cluster. Unfortunately, this is true only if each cluster induces exactly one connected component in any optimal solution (see Corollary \ref{myCorollary} and the discussion preceding its statement), and hence, when such a property does not hold, this question remains open as well.






\end{document}